\title[Asynchronous Federated Optimization]{Asynchronous Federated Optimization}
\begin{document}

\allowdisplaybreaks

\maketitle

\def\Blue{\color{blue}}
\def\Purple{\color{purple}}

\def\A{{\bf A}}
\def\a{{\bf a}}
\def\B{{\bf B}}
\def\C{{\bf C}}
\def\c{{\bf c}}
\def\D{{\bf D}}
\def\d{{\bf d}}
\def\E{{\mathbb{E}}}
\def\F{{\bf F}}
\def\e{{\bf e}}
\def\f{{\bf f}}
\def\G{{\bf G}}
\def\H{{\bf H}}
\def\I{{\bf I}}
\def\K{{\bf K}}
\def\L{{\bf L}}
\def\M{{\bf M}}
\def\m{{\bf m}}
\def\N{{\bf N}}
\def\n{{\bf n}}
\def\Q{{\bf Q}}
\def\q{{\bf q}}
\def\R{{\mathbb{R}}}
\def\S{{\bf S}}
\def\s{{\bf s}}
\def\T{{\bf T}}
\def\U{{\bf U}}
\def\u{{\bf u}}
\def\V{{\bf V}}
\def\v{{\bf v}}
\def\W{{\bf W}}
\def\w{{\bf w}}
\def\X{{\bf X}}
\def\x{{\bf x}}
\def\bx{{\bar{x}}}
\def\Y{{\bf Y}}
\def\y{{\bf y}}
\def\Z{{\bf Z}}
\def\z{{\bf z}}
\def\0{{\bf 0}}
\def\1{{\bf 1}}

\def\AM{{\mathcal A}}
\def\CM{{\mathcal C}}
\def\DM{{\mathcal D}}
\def\GM{{\mathcal G}}
\def\FM{{\mathcal F}}
\def\IM{{\mathcal I}}
\def\NM{{\mathcal N}}
\def\OM{{\mathcal O}}
\def\SM{{\mathcal S}}
\def\TM{{\mathcal T}}
\def\UM{{\mathcal U}}
\def\XM{{\mathcal X}}
\def\YM{{\mathcal Y}}
\def\RB{{\mathbb R}}

\def\TX{\tilde{\bf X}}
\def\tx{\tilde{\bf x}}
\def\ty{\tilde{\bf y}}
\def\TZ{\tilde{\bf Z}}
\def\tz{\tilde{\bf z}}
\def\hd{\hat{d}}
\def\HD{\hat{\bf D}}
\def\hx{\hat{\bf x}}
\def\TD{\tilde{\Delta}}
\def\tg{\tilde{g}}
\def\tmu{\tilde{\mu}}

\def\alp{\mbox{\boldmath$\alpha$\unboldmath}}
\def\bet{\mbox{\boldmath$\beta$\unboldmath}}
\def\epsi{\mbox{\boldmath$\epsilon$\unboldmath}}
\def\etab{\mbox{\boldmath$\eta$\unboldmath}}
\def\ph{\mbox{\boldmath$\phi$\unboldmath}}
\def\pii{\mbox{\boldmath$\pi$\unboldmath}}
\def\Ph{\mbox{\boldmath$\Phi$\unboldmath}}
\def\Ps{\mbox{\boldmath$\Psi$\unboldmath}}
\def\tha{\mbox{\boldmath$\theta$\unboldmath}}
\def\Tha{\mbox{\boldmath$\Theta$\unboldmath}}
\def\muu{\mbox{\boldmath$\mu$\unboldmath}}
\def\Si{\mbox{\boldmath$\Sigma$\unboldmath}}
\def\si{\mbox{\boldmath$\sigma$\unboldmath}}
\def\Gam{\mbox{\boldmath$\Gamma$\unboldmath}}
\def\Lam{\mbox{\boldmath$\Lambda$\unboldmath}}
\def\De{\mbox{\boldmath$\Delta$\unboldmath}}
\def\Ome{\mbox{\boldmath$\Omega$\unboldmath}}
\def\TOme{\mbox{\boldmath$\hat{\Omega}$\unboldmath}}
\def\vps{\mbox{\boldmath$\varepsilon$\unboldmath}}
\newcommand{\ti}[1]{\tilde{#1}}
\def\Ncal{\mathcal{N}}
\def\argmax{\mathop{\rm argmax}}
\def\argmin{\mathop{\rm argmin}}
\providecommand{\abs}[1]{\lvert#1\rvert}
\providecommand{\norm}[2]{\lVert#1\rVert_{#2}}

\def\Zs{{\Z_{\mathrm{S}}}}
\def\Zl{{\Z_{\mathrm{L}}}}
\def\Yr{{\Y_{\mathrm{R}}}}
\def\Yg{{\Y_{\mathrm{G}}}}
\def\Yb{{\Y_{\mathrm{B}}}}
\def\Ar{{\A_{\mathrm{R}}}}
\def\Ag{{\A_{\mathrm{G}}}}
\def\Ab{{\A_{\mathrm{B}}}}
\def\As{{\A_{\mathrm{S}}}}
\def\Asr{{\A_{\mathrm{S}_{\mathrm{R}}}}}
\def\Asg{{\A_{\mathrm{S}_{\mathrm{G}}}}}
\def\Asb{{\A_{\mathrm{S}_{\mathrm{B}}}}}
\def\Or{{\Ome_{\mathrm{R}}}}
\def\Og{{\Ome_{\mathrm{G}}}}
\def\Ob{{\Ome_{\mathrm{B}}}}

\def\Expect{\mathbb{E}}

\def\vec{\mathrm{vec}}
\def\fold{\mathrm{fold}}
\def\index{\mathrm{index}}
\def\sgn{\mathrm{sgn}}
\def\tr{\mathrm{tr}}
\def\rk{\mathrm{rank}}
\def\diag{\mathsf{diag}}
\def\const{\mathrm{Const}}
\def\dg{\mathsf{dg}}
\def\st{\mathsf{s.t.}}
\def\vect{\mathsf{vec}}
\def\MCAR{\mathrm{MCAR}}
\def\MSAR{\mathrm{MSAR}}
\def\etal{{\em et al.\/}\,}
\def\prox{\mathrm{prox}^h_\gamma}
\newcommand{\indep}{{\;\bot\!\!\!\!\!\!\bot\;}}

\newcommand{\mtrxt}[1]{{#1}^\top}
\newcommand{\mtrx}[4]{\left[\begin{matrix}#1 & #2 \\ #3 & #4\end{matrix}\right]}
\DeclarePairedDelimiter\vnorm{\lVert}{\rVert}
\DeclarePairedDelimiterX{\innerprod}[2]{\langle}{\rangle}{#1, #2}

\def\Lsize{\hbox{\space \raise-2mm\hbox{$\textstyle \L \atop \scriptstyle {m\times 3n}$} \space}}
\def\Ssize{\hbox{\space \raise-2mm\hbox{$\textstyle \S \atop \scriptstyle {m\times 3n}$} \space}}
\def\Osize{\hbox{\space \raise-2mm\hbox{$\textstyle \Ome \atop \scriptstyle {m\times 3n}$} \space}}
\def\Tsize{\hbox{\space \raise-2mm\hbox{$\textstyle \T \atop \scriptstyle {3n\times n}$} \space}}
\def\Bsize{\hbox{\space \raise-2mm\hbox{$\textstyle \B \atop \scriptstyle {m\times n}$} \space}}

\newcommand{\twopartdef}[4]
{
	\left\{
		\begin{array}{ll}
			#1 & \mbox{if } #2 \\
			#3 & \mbox{if } #4
		\end{array}
	\right.
}

\newcommand{\tabincell}[2]{\begin{tabular}{@{}#1@{}}#2\end{tabular}}

\newcommand{\NewProcedure}[1]{\vspace{0.3cm}\STATE\hspace{-\algorithmicindent} {\large\underline{\textbf{{#1}:}}}
\setcounter{ALC@line}{0}}
\newcommand{\NewThread}[1]{\vspace{0.2cm}\STATE\hspace{-\algorithmicindent} {\quad\large\underline{\textbf{{#1}:}}} 
\setcounter{ALC@line}{0}}

\DeclarePairedDelimiter\ceil{\lceil}{\rceil}
\DeclarePairedDelimiter\floor{\lfloor}{\rfloor}

\newcommand{\ip}[2]{\left\langle #1, #2 \right \rangle}

\newtheorem{assumption}{Assumption}

\newcommand{\pfcomment}[1]{
\tag*{$\triangleright$ #1}
}

\newcommand{\mcir}[1]{{\large \textcircled{\small #1}}}

\begin{abstract}%
Federated learning enables training on a massive number of edge devices. To improve flexibility and scalability, we propose a new asynchronous federated optimization algorithm. We prove that the proposed approach has near-linear convergence to a global optimum, for both strongly convex and a restricted family of non-convex problems. Empirical results show that the proposed algorithm converges quickly and tolerates staleness in various applications. 
\end{abstract}

\section{Introduction}

Federated learning~\cite{konevcny2016federated,mcmahan2016communication} enables training a global model on datasets partitioned across a massive number of resource-weak edge devices. 
Motivated by the modern phenomenon of distributed (often personal) data collected by edge devices at scale, federated learning can use the large amounts of training data from diverse users for better representation and generalization.
Federated learning is also motivated by the desire for privacy preservation~\cite{bonawitz2019towards,bonawitz2017practical}. 
% \ig{cite please}. 
In some scenarios, on-device training without depositing data in the cloud may be legally required by regulations~\cite{act1996health,FERPA,GDPR}. 
% \ig{also mention US FERPA laws.} 
% GDPR does not explicitly require this AFAIK.

A federated learning system is often composed of servers and workers, with an architecture that is similar to parameter servers~\cite{li2014scaling,li2014communication,ho2013more}. The workers (edge devices) train the models locally on private data. The servers aggregate the learned models from the workers and update the global model.
% and produce a global model on the cloud/datacenter. 
% To help protect user privacy, the workers do not expose the training data to the servers, and instead, only expose the trained model(s).

Federated learning has three key properties~\cite{konevcny2016federated,mcmahan2016communication}: 
1) Infrequent task activation. For the weak edge devices, learning tasks are executed only when the devices are idle, charging, and connected to unmetered networks~\cite{bonawitz2019towards}. 
2) Infrequent communication. The connection between edge devices and the remote servers may frequently be unavailable, slow, or expensive (in terms of communication costs or battery power usage).
3) Non-IID training data. For federated learning, the data on different devices are disjoint, thus may represent non-identically distributed samples from the population. 

Federated learning~\cite{mcmahan2016communication,bonawitz2019towards} is most often implemented using the synchronous approach, which could be slow due to stragglers.
When handling massive edge devices, there could be a large number of stragglers. 
As availability and completion time vary from device to device, due to limited computational capacity and battery time, the global synchronization is difficult, especially in the federated learning scenario. 

Asynchronous training~\cite{zinkevich2009slow,lian2017asynchronous,zheng2017asynchronous} is widely used in traditional distributed stochastic gradient descent (SGD) for stragglers and heterogeneous latency ~\cite{zinkevich2009slow,lian2017asynchronous,zheng2017asynchronous}. 
% While classic asynchronous SGD requires directly sending gradients from the workers to the server after each local update, such an approach is not feasible for edge devices due to unreliable and slow communication -- hence the prevalence of standard federated averaging that sends model parameters after several local updates~\cite{konevcny2015federated,konevcny2016federated,mcmahan2016communication,bonawitz2019towards}. 
In this paper, we take the advantage of asynchronous training and combines it with federated optimization.

%To address these issues, % that arise in synchronous federated optimization, 
We propose a novel asynchronous algorithm for federated optimization. The key ideas are (i) to solve regularized local problems to guarantee convergence, and (ii) then use a weighted average to update the global model, where the mixing weight is set adaptively as a function of the staleness. Together, these techniques result in an effective asynchronous federated optimization procedure.
% Specifically, we propose and evaluate a flexible and efficient federated learning system with provable convergence of model training. 
The main contributions of our paper are listed as follows:
\setitemize[0]{leftmargin=*}
\begin{itemize}[topsep=2pt,itemsep=0pt,partopsep=0pt,parsep=0pt]
\item We propose a new asynchronous federated optimization algorithm and a prototype system design.
\item We prove the convergence of the proposed approach for a restricted family of non-convex problems.
\item We propose strategies for controlling the error caused by asynchrony. To this end, we introduce a mixing hyperparameter which adaptively controls the trade-off between the convergence rate and variance reduction according to the staleness.
\item We show empirically that the proposed algorithm converges quickly and often outperforms synchronous federated optimization in practical settings. 
\end{itemize}

%\input{related_work}

%\begin{table}[htb]
%\caption{Notations and Terminologies.}
%\label{tbl:notations}
%\begin{center}
%\begin{small}
%\begin{tabular}{|l|l|}
%\hline 
%{\bf Notation/Term}  & {\bf Description} \\ \hline
%$n$    & Number of devices \\ \hline
%$T$    & Number of global epochs  \\ \hline
%$[n]$    & Set of integers $\{1, \ldots, n \}$  \\ \hline
%$H_{min}$, $H_{max}$, $\delta$    & Minimal/maximal number of local iterations, $\delta = \frac{H_{max}}{H_{min}}$ is the imbalance ratio  \\ \hline
%$H^i_\tau$    & Number of local iterations in the $\tau^{\mbox{th}}$ epoch on the $i$th device \\ \hline
%$x_t$    & Global model in the $t^{\mbox{th}}$ server epoch  \\ \hline
%$x^i_{\tau,h}$    & Model initialized from $x_\tau$, updated in the $h$th local iteration, on the $i$th device  \\ \hline
%$\mathcal{D}^i$, $z^i_{t, h}$    & $\mathcal{D}^i$ is dataset on the $i$th device, $z^i_{t, h} \sim \mathcal{D}^i$  \\ \hline
%$\gamma$    & Learning rate  \\ \hline
%$\alpha$, $t-\tau$, $s(t-\tau)$    & Mixing hyperparameter, staleness, and function of staleness for adaptive $\alpha$  \\ \hline
%$K$ & Maximum staleness: $t-\tau \leq K$ \\ \hline
%$\rho$    & Regularization weight  \\ \hline
%$\| \cdot \|$    & All the norms in this paper are $l_2$-norms  \\ \hline
%Device    & Where the training data are placed  \\ \hline
%Worker    & Process that trains the model, one worker on each device  \\ \hline
%\end{tabular}
%\end{small}
%\end{center}
%\vspace{-0.3cm}
%\end{table}

\vspace{-0.5cm}
\noindent
\begin{minipage}[t]{0.47\textwidth}
\begin{table}[H]
\caption{Notations and Terminologies.}\vspace{-0.3cm}
\label{tbl:notations}
\begin{center}
\begingroup
\small
\renewcommand{\arraystretch}{0.5}
\setlength{\tabcolsep}{3pt}
\begin{tabular}{|l|l|}
\hline 
{\bf Notation}  & {\bf Description} \\ \hline
$n$    & Number of devices \\ \hline
$T$    & Number of global epochs  \\ \hline
$[n]$    & Set of integers $\{1, \ldots, n \}$  \\ \hline
$H_{min}$    & Minimal number of local iterations  \\ \hline
$H_{max}$  & Maximal number of local iterations  \\ \hline
$\delta$    & $\delta = \frac{H_{max}}{H_{min}}$ is the imbalance ratio  \\ \hline
$H^i_\tau$    & Number of local iterations \\ & in the $\tau^{\mbox{th}}$ epoch  on the $i$th device \\ \hline
$x_t$    & Global model in the $t^{\mbox{th}}$ epoch on server  \\ \hline
$x^i_{\tau,h}$    & Model initialized from $x_\tau$, updated in \\ & the $h$th local iteration, on the $i$th device  \\ \hline
$\mathcal{D}^i$    & Dataset on the $i$th device  \\ \hline
$z^i_{t, h}$    & Data~(minibatch) sampled from $\mathcal{D}^i$  \\ \hline
$\gamma$    & Learning rate  \\ \hline
$\alpha$    & Mixing hyperparameter  \\ \hline
$\rho$    & Regularization weight  \\ \hline
$t-\tau$    & Staleness  \\ \hline
$s(t-\tau)$    & Function of staleness for adaptive $\alpha$  \\ \hline
$\| \cdot \|$    & All the norms in this paper are $l_2$-norms  \\ \hline
Device    & Where the training data are placed  \\ \hline
Worker    & One worker on each device, \\ & process that trains the model  \\ \hline
\end{tabular}
\endgroup
\end{center}
\end{table}
\end{minipage}
\hfill
\begin{minipage}[t]{0.510\textwidth}
\begin{figure}[H]
\centering
\includegraphics[width=\textwidth,height=4.1cm]{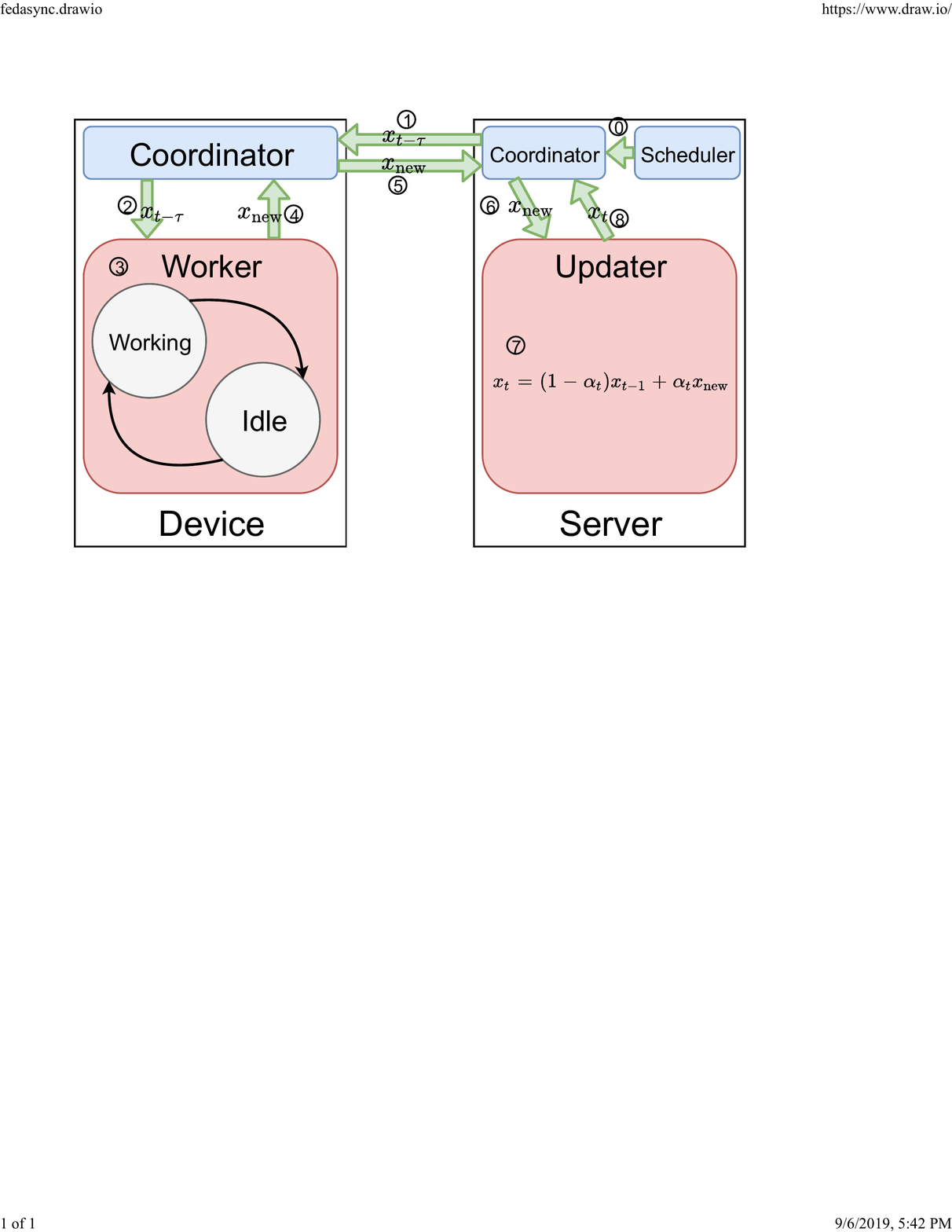}\vspace{-0.2cm}
\caption{System overview. \mcir{0}: scheduler triggers training through coordinator. \mcir{1}, \mcir{2}: worker receives model $x_{t-\tau}$ from server via coordinator. \mcir{3}: worker computes local updates as Algorithm~\ref{alg:fed_async}. Worker can switch between the two states: working and idle. \mcir{4}, \mcir{5}, \mcir{6}: worker pushes the locally updated model to server via the coordinator. Coordinator queues the models received in \mcir{5}, and feeds them to the updater sequentially in \mcir{6}. \mcir{7}, \mcir{8}: server updates the global model and makes it ready to read in the coordinator. In our system, \mcir{1} and \mcir{5} operate asynchronously in parallel.}
\label{fig:overview}
\end{figure}
\end{minipage}

\vspace{-0.5cm}
\section{Problem formulation}
We consider federated learning with $n$ devices. On each device, a worker process trains a model on local data. The overall goal is to train a global model $x \in \R^d$ using data from all the devices. Formally, we solve
$
\min_{x \in \R^d} F(x), 
$
where $F(x) = \frac{1}{n} \sum_{i \in [n]} \E_{z^i \sim \mathcal{D}^i} f(x; z^i)$, for $\forall i \in [n]$, $z^i$ is sampled from the local data $\mathcal{D}^i$ on the $i$th device.
Note that different devices have different local datasets, i.e., $\mathcal{D}^i \neq \mathcal{D}^j, \forall i \neq j$. 
% Thus, samples drawn from different devices may have different expectations i.e. in general, $\E_{z^i \sim \mathcal{D}^i} f(x; z^i) \neq \E_{z^j \sim \mathcal{D}^j} f(x; z^j), \forall i \neq j$. 

\section{Methodology}
\label{sec:methodology}

The training takes $T$ global epochs. In the $t^{\mbox{th}}$ epoch, the server receives a locally trained model $x_{new}$ from an arbitrary worker, and updates the global model by weighted averaging:
$
x_t = (1-\alpha) x_{t-1} + \alpha x_{new},
$
where $\alpha \in (0, 1)$ is the mixing hyperparameter.
A system overview is illustrated in Figure~\ref{fig:overview}.

On an arbitrary device $i$, after receiving a global model $x_t$ (potentially stale) from the server, we locally solve the following regularized optimization problem using SGD for multiple iterations:
$
\min_{x \in \R^d} \E_{z^i \sim \mathcal{D}^i} f(x; z^i) + \frac{\rho}{2} \| x - x_{t} \|^2.
$
For convenience, we define $g_{x'}(x;z) = f(x; z) + \frac{\rho}{2}\|x - x'\|^2$.
% {\sk{$\mu$ is not defined.}}

\begin{algorithm}[hbt!]
\caption{Asynchronous Federated Optimization (FedAsync)}
\SetKwFunction{FServer}{Server}
\SetKwFunction{FScheduler}{Scheduler}
\SetKwFunction{FUpdater}{Updater}
\SetKwFunction{FWorker}{Worker}

\SetKwProg{Fn}{Process}{:}{}
\Fn{\FServer{$\alpha \in (0,1)$}}{
Initialize $x_0$, $\alpha_t \leftarrow \alpha, \forall t \in [T]$ \\
Run Scheduler() thread and Updater() thread asynchronously in parallel
}

\SetKwProg{Fn}{Thread}{:}{}
\Fn{\FScheduler{}}{
Periodically trigger training tasks on some workers, and send the global model with time stamp
}

\SetKwProg{Fn}{Thread}{:}{}
\Fn{\FUpdater{}}{
\For{epoch $t \in [T]$}{
	Receive the pair $(x_{new}, \tau)$ from any worker \\
	Optional: $\alpha_t \leftarrow \alpha \times s(t - \tau)$, $s(\cdot)$ is a function of the staleness \\
	$
	x_t \leftarrow (1-\alpha_t) x_{t-1} + \alpha_t x_{new}
	$
}
}

\SetKwProg{Fn}{Process}{:}{}
\Fn{\FWorker{}}{
\For{$i \in [n]$ in parallel}{
	\If{triggered by the scheduler}{
	Receive the pair of the global model and its time stamp $(x_t, t)$ from the server \\
	$\tau \leftarrow t$, $x_{\tau, 0}^i \leftarrow x_t$ \\
	Define $g_{x_t}(x;z) = f(x; z) + \frac{\rho}{2}\|x - x_t\|^2$, where $\rho > \mu$ \\
	\For{local iteration $h \in [H_\tau^i]$}{
		Randomly sample $z_{\tau, h}^i \sim \mathcal{D}^i$ \\
		Update 
		$
		x_{\tau, h}^i \leftarrow x_{\tau, h-1}^i - \gamma \nabla g_{x_t}(x_{\tau, h-1}^i; z_{\tau, h}^i)
		$
	}
	Push $(x_{\tau, H_\tau^i}^i, \tau)$ to the server 
	}
}
}

\label{alg:fed_async}
\end{algorithm} 

The server and workers conduct updates asynchronously, i.e., the server immediately updates the global model whenever it receives a local model. The communication between the server and the workers is non-blocking. 
Thus, the server and workers can update the models at any time without synchronization, which is favorable when the devices have heterogeneous conditions. 
% We note that the proposed architecture is designed to shift the computational and communication bottleneck from the workers to the server -- which we conjecture is favorable overall as the server is typically much more powerful (and has better communication bandwidth) than the workers in most standard deployments.

The detailed algorithm is shown in Algorithm~\ref{alg:fed_async}. The model parameter $x_{\tau, h}^i$ is updated in the $h$th local iteration after receiving $x_\tau$, on the $i$th device. The data $z_{\tau, h}^i$ is randomly drawn in the $h$th local iteration after receiving $x_\tau$, on the $i$th device. $H_{\tau}^i$ is the number of local iterations after receiving $x_\tau$ on the $i$th device. $\gamma$ is the learning rate and $T$ is the total number of global epochs.

\begin{remark}
On the server side, the scheduler and the updater run asynchronously in parallel. The scheduler periodically triggers training tasks and controls the staleness~($t-\tau$ in the updater thread). The updater receives models from workers and updates the global model. Our architecture allows for multiple updater threads with read-write lock on the global model, which improves the throughput. 
\end{remark}

\begin{remark}
Intuitively, larger staleness results in greater error when updating the global model. For the local models with large staleness $(t-\tau)$, we can decrease $\alpha$ to mitigate the error caused by staleness. 
As shown in Algorithm~\ref{alg:fed_async}, optionally, we use a function $s(t-\tau)$ to determine the value of $\alpha$. In general, $s(t-\tau)$ should be $1$ when $t = \tau$, and monotonically decrease when $(t-\tau)$ increases. There are many functions that satisfy such two properties, with different decreasing rate, e.g., $s_a(t-\tau) = \frac{1}{t-\tau+1}$. 
The options used in this paper can be found in Section~\ref{sec:evaluation_setup}.
\end{remark}

\section{Convergence analysis}

First, we introduce some definitions and assumptions for our convergence analysis.
\begin{definition} (Smoothness)
A differentiable function $f$ is $L$-smooth if for $\forall x, y$, 
$
f(y) - f(x) \leq \ip{\nabla f(x)}{y-x} + \frac{L}{2} \|y-x\|^2,
$
where $L > 0$.
\end{definition}

\begin{definition} (Weak convexity)
A differentiable function $f$ is $\mu$-weakly convex if the function $g$ with $g(x) = f(x) + \frac{\mu}{2} \|x\|^2$ is convex,
where $\mu \geq 0$.
$f$ is convex if $\mu=0$, and non-convex if $\mu > 0$.
\end{definition}

We have the following convergence guarantees. Detailed proofs can be found in the appendix. 

\begin{theorem}
Assume that $F$ is $L$-smooth and $\mu$-weakly convex, and each worker executes at least $H_{min}$ and at most $H_{max}$ local updates before pushing models to the server. We assume bounded delay $t-\tau \leq K$. The imbalance ratio of local updates is $\delta = \frac{H_{max}}{H_{min}}$. Furthermore, we assume that for $\forall x \in \R^d, i \in [n]$, and $\forall z \sim \mathcal{D}^i$, we have $\| \nabla f(x; z) \|^2 \leq V_1$ and $\| \nabla g_{x'}(x; z) \|^2 \leq V_2$, $\forall x'$. For any small constant $\epsilon > 0$, taking $\rho$ large enough such that $\rho > \mu$ and $- (1 + 2 \rho + \epsilon) V_2 + \rho^2 \|x_{\tau, h-1} - x_\tau\|^2 - \frac{\rho}{2} \|x_{\tau, h-1} - x_\tau\|^2 \geq 0, \forall x_{\tau, h-1}, x_{\tau}$, and $\gamma < \frac{1}{L}$, after $T$ global updates, Algorithm~\ref{alg:fed_async} converges to a critical point: 
$
\min_{t = 0}^{T-1} \E \| \nabla F(x_t) \|^2 \leq \frac{\E\left[ F(x_{0}) - F(x_T) \right]}{\alpha \gamma \epsilon T H_{min}}  
+ \OM\left( \frac{\gamma H_{max}^3 + \alpha K H_{max} }{\epsilon H_{min}} \right)
+ \OM\left( \frac{\alpha^2\gamma K^2 H_{max}^2 + \gamma K^2 H_{max}^2 }{\epsilon H_{min}} \right).
$
%Taking $\alpha = \frac{1}{\sqrt{H_{min}}}$, $\gamma = \frac{1}{L\sqrt{T}}$, $T = H_{min}^5$, we have 
%\begin{align*}
%&\min_{t = 0}^{T-1} \E \| \nabla F(x_t) \|^2 \\
%&\leq \OM\left( \frac{1}{\epsilon H_{min}^3} 
% + \frac{\delta^3}{\epsilon \sqrt{H_{min}}} 
% + \frac{K \delta}{\epsilon \sqrt{H_{min}}} \right)\\
%& + \OM\left( \frac{K^2 \delta^2}{\epsilon \sqrt{H_{min}^5}}  
% + \frac{K^2 \delta^2}{\epsilon \sqrt{H_{min}^3}} \right).
%\end{align*}
\end{theorem}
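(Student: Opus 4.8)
The plan is to follow the standard descent-lemma template for asynchronous/local-update SGD, tracking how the global model moves at each global epoch and summing the resulting inequalities. Because $F$ is $L$-smooth, for a single global update $x_t = (1-\alpha)x_{t-1} + \alpha x_{new}$ I would write $F(x_t) \le F(x_{t-1}) + \ip{\nabla F(x_{t-1})}{x_t - x_{t-1}} + \frac{L}{2}\|x_t-x_{t-1}\|^2$, and note that $x_t - x_{t-1} = \alpha(x_{new} - x_{t-1})$. The increment $x_{new}-x_{t-1}$ is the accumulated local trajectory $x_{\tau,H^i_\tau}^i - x_{t-1}$, which equals $(x_\tau - x_{t-1}) + (x_{\tau,H}^i - x_\tau)$, with $x_{\tau,H}^i - x_\tau = -\gamma\sum_{h=1}^{H^i_\tau}\nabla g_{x_\tau}(x^i_{\tau,h-1};z^i_{\tau,h})$. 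The first piece is the staleness error $x_\tau - x_{t-1}$ (nonzero because $\tau < t$), and I would bound $\|x_\tau - x_{t-1}\|$ in terms of the $K$ intermediate global moves, each of size at most $\alpha\gamma H_{max}\sqrt{V_2}$, giving $\|x_\tau - x_{t-1}\| = \OM(\alpha\gamma K H_{max})$ — this is where the $\alpha K H_{max}$ and $\alpha^2\gamma K^2 H_{max}^2$ terms in the bound originate.

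**Key steps, in order.** First, establish a one-step local descent bound: using $\gamma < 1/L$ and the gradient-norm bounds $V_1, V_2$, control $\E\ip{\nabla F(x_\tau)}{x^i_{\tau,h}-x^i_{\tau,h-1}}$ against a $-\Theta(\gamma)\|\nabla F(x_\tau)\|^2$ term plus error terms; here the regularizer is essential — the condition $\rho>\mu$ makes each $g_{x_\tau}$ strongly convex, and the somewhat mysterious inequality $-(1+2\rho+\epsilon)V_2 + \rho^2\|x_{\tau,h-1}-x_\tau\|^2 - \frac{\rho}{2}\|x_{\tau,h-1}-x_\tau\|^2 \ge 0$ is exactly what is needed to absorb the cross terms between $\nabla f$ and the regularizer gradient $\rho(x-x_\tau)$ so that the local step still makes progress on $F$ itself. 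Second, chain these $H^i_\tau$ local bounds and relate $\nabla F$ at the local iterates back to $\nabla F(x_\tau)$ via $L$-smoothness (drift $\|x^i_{\tau,h}-x_\tau\| \le \gamma h\sqrt{V_2} = \OM(\gamma H_{max}\sqrt{V_2})$), producing the $\gamma H_{max}^3$ error term. Third, substitute the staleness bound for $\|x_\tau - x_{t-1}\|$ and collect the $\alpha$-dependent error terms. Fourth, telescope over $t = 0,\dots,T-1$: the $\ip{\nabla F(x_{t-1})}{\cdot}$ terms aggregate to $\alpha\gamma\epsilon\sum_t H^i_\tau \cdot\min_t\E\|\nabla F(x_t)\|^2 \ge \alpha\gamma\epsilon T H_{min}\min_t\E\|\nabla F(x_t)\|^2$, while $F(x_{t-1})-F(x_t)$ telescopes to $F(x_0)-F(x_T)$; dividing through by $\alpha\gamma\epsilon T H_{min}$ gives the stated bound, with the $\frac{L}{2}\|x_t-x_{t-1}\|^2 = \OM(\alpha^2\gamma^2 H_{max}^2 V_2)$ term feeding into the big-$\OM$ remainders.

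**The main obstacle** I anticipate is the bookkeeping around the expectation conditioning and the staleness coupling. The model $x_\tau$ that a worker receives, the number of local steps $H^i_\tau$, and the delay $t-\tau$ are all determined by the asynchronous schedule, which is adversarial subject only to $t-\tau \le K$ and $H_{min}\le H^i_\tau\le H_{max}$; making the conditional-expectation argument clean — so that $\E\ip{\nabla F(x_\tau)}{\nabla g_{x_\tau}(x^i_{\tau,h-1};z^i_{\tau,h})} = \ip{\nabla F(x_\tau)}{\nabla(\E_z g_{x_\tau})(x^i_{\tau,h-1})}$ holds with the right filtration — requires care. The second delicate point is verifying that the regularizer-absorption inequality genuinely closes: one must check that after expanding $\|x^i_{\tau,h}-x^i_{\tau,h-1}\|^2$ and the inner product involving $\rho(x^i_{\tau,h-1}-x_\tau)$, every term not of the form $-\Theta(\gamma)\|\nabla F(x_\tau)\|^2$ or a controlled $\OM(\cdot)$ error is dominated, which is precisely the role of choosing $\rho$ large. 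Everything else is the routine triangle-inequality and summation grind that I would not carry out here.
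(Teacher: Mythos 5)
Your plan is essentially correct and reproduces every error term, but it routes the global aggregation step differently from the paper, and the difference determines where the hard work hides. The paper never applies the smoothness descent lemma to $F$ at the update $x_t=(1-\alpha)x_{t-1}+\alpha x_{new}$; instead it invokes convexity of the surrogate $G_{x_{t-1}}(x)=F(x)+\frac{\rho}{2}\|x-x_{t-1}\|^2$ --- this is precisely where the $\mu$-weak convexity assumption and $\rho>\mu$ are consumed --- to write $G_{x_{t-1}}(x_t)\le(1-\alpha)G_{x_{t-1}}(x_{t-1})+\alpha G_{x_{t-1}}(x_{\tau,H})$, and then pays for staleness only through the scalar $F(x_\tau)-F(x_{t-1})$, bounded by smoothness together with the displacement bound $\|x_\tau-x_{t-1}\|=\OM(\alpha\gamma K H_{max}\sqrt{V_2})$. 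The actual descent is established entirely along the local trajectory, where the correlation inequality $\ip{\nabla G_{x_\tau}(x_{\tau,h-1})}{\nabla g_{x_\tau}(x_{\tau,h-1})}\ge\epsilon\|\nabla F(x_{\tau,h-1})\|^2$ (this is exactly what the condition $-(1+2\rho+\epsilon)V+\rho^2\|x_{\tau,h-1}-x_\tau\|^2-\frac{\rho}{2}\|x_{\tau,h-1}-x_\tau\|^2\ge0$ buys) compares two gradients evaluated at the \emph{same} point with the same anchor. In your decomposition the negative term must instead come from $\E\ip{\nabla F(x_{t-1})}{-\gamma\sum_h\nabla g_{x_\tau}(x_{\tau,h-1};z_{\tau,h})}$, which carries a double mismatch --- $x_{t-1}$ versus $x_{\tau,h-1}$, i.e.\ staleness plus local drift --- so on top of the $\rho$-absorption you need an extra correction of order $L\|x_{t-1}-x_{\tau,h-1}\|=\OM(\alpha\gamma K H_{max}+\gamma H_{max})$ before the same condition closes; the resulting terms are of the same order as ones already present, so the final bound is unchanged, but this is the step you have only asserted and it is genuinely more delicate than the paper's same-point version. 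A compensating advantage of your route is that it yields $\|\nabla F\|^2$ at the global iterates $x_{t-1}$, which is literally what the theorem claims, whereas the paper's telescoping produces $\sum_h\E\|\nabla F(x_{\tau,h})\|^2$ over \emph{local} iterates and then passes to $\min_t\E\|\nabla F(x_t)\|^2$ without justification. The filtration concern you raise is real but mild, since the argument conditions on $x_{\tau,h-1}$ and uses only the worst-case bounds $t-\tau\le K$ and $H\le H_{max}$; the more substantive unstated issue (shared with the paper) is that under non-IID data $\E_{z\sim\DM^i}\nabla f(x;z)$ is the local gradient, not $\nabla F(x)$, which is another reason the bounded-gradient constants $V_1,V_2$ are doing the heavy lifting.
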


% \begin{remark}
% The mixing hyperparameter $\alpha \in (0, 1)$ controls the trade-off between the convergence rate and additional error caused by variance. Larger $\alpha$ makes the term $\frac{\E\left[ F(x_{0}) - F(x_T) \right]}{\alpha \gamma \epsilon T H_{min}}$ converge faster to $0$, but also enlarges the error term $\alpha K H_{max}$.
% \end{remark}

% \begin{remark}
% In general, the additional error is controlled by two factors: the maximum delay $K$ and the imbalance ratio $\delta = \frac{H_{max}}{H_{min}}$. Larger delay and imbalance ratio result in slower convergence.
% \end{remark}

% The mixing hyperparameter $\alpha \in (0, 1)$ controls the trade-off between the convergence rate and additional error caused by variance. Larger $\alpha$ makes the term $\frac{\E\left[ F(x_{0}) - F(x_T) \right]}{\alpha \gamma \epsilon T H_{min}}$ converge faster to $0$, but also enlarges the error term $\alpha K H_{max}$.
% In general, the additional error is controlled by two factors: the maximum delay $K$ and the imbalance ratio $\delta = \frac{H_{max}}{H_{min}}$. 
% Larger delay and imbalance ratio result in slower convergence.

\section{Experiments}

In this section, we empirically evaluate the proposed algorithm. 
% Enlarged plots can be found in the appendix.

\subsection{Datasets}
We conduct experiments on two benchmarks: CIFAR-10~\cite{krizhevsky2009learning}, and WikiText-2~\cite{merity2016pointer}. The training set is partitioned onto $n=100$ devices. 
The mini-batch sizes are 50 and 20 respectively.

\subsection{Evaluation setup}
\label{sec:evaluation_setup}

% The experiments are conducted on CPU devices. We implement the federated optimization algorithms using the MXNET~\cite{chen2015mxnet} framework.

The baseline algorithm is \textit{FedAvg} introduced by \cite{mcmahan2016communication}, which implements synchronous federated optimization. For \textit{FedAvg}, in each epoch, $k=10$ devices are randomly selected to launch local updates. We also consider single-thread SGD as a baseline. For \textit{FedAsync}, we simulate the asynchrony by randomly sampling the staleness $(t-\tau)$ from a uniform distribution. 
% We use grid search to tune the learning rates and report the best results according to the top-1 accuracy on the testing set. 

We repeat each experiment 10 times and take the average. For CIFAR-10, we use the top-1 accuracy on the testing set as the evaluation metric. 
To compare asynchronous training and synchronous training, we consider ``metrics vs. number of gradients''.
The ``number of gradients'' is the number of gradients applied to the global model. 
% For WikiText-2, we use the perplexity~(i.e., $\exp(loss)$)  on the testing set as the evaluation metric.

For convenience, we name Algorithm~\ref{alg:fed_async} as \textit{FedAsync}. We also test the performance of \textit{FedAsync} with adaptive mixing hyperparameters $\alpha_t = \alpha \times s(t-\tau)$, as outlined in Section~\ref{sec:methodology}. We employ the following three strategies for the weighting function $s(t - \tau)$ (parameterized by $a,b>0$):
\noindent
\begin{minipage}[t]{0.41\textwidth}
\setitemize[0]{leftmargin=*}
\begin{itemize}[topsep=2pt,itemsep=0pt,partopsep=0pt,parsep=0pt]
\item Constant: $s(t-\tau) = 1$.
\item Polynomial: $s_a(t-\tau) = (t-\tau+1)^{-a}$.
\end{itemize}
\end{minipage}
\hfill
\begin{minipage}[t]{0.58\textwidth}
\setitemize[0]{leftmargin=*}
\begin{itemize}[topsep=2pt,itemsep=0pt,partopsep=0pt,parsep=0pt]
\item Hinge: 
$
s_{a,b}(t-\tau) = 
\begin{cases}
1 & \mbox{if $t-\tau \leq b$}\\
\frac{1}{a(t-\tau-b)+1} & \mbox{otherwise}
\end{cases}.
$
\end{itemize}
\end{minipage}

For convenience, we refer to \textit{FedAsync} with constant $\alpha$ as \textit{FedAsync+Const}, \textit{FedAsync} with polynomial adaptive $\alpha$ as \textit{FedAsync+Poly}, and \textit{FedAsync} with hinge adaptive $\alpha$ as \textit{FedAsync+Hinge}.

% In general, it is non-trivial to compare asynchronous training and synchronous training in a fair way. We conduct three comparisons: metrics vs. number of global epochs, metrics vs. number of gradients, and metrics vs. number of communications:
% To compare asynchronous training and synchronous training, we consider ``metrics vs. number of gradients''.
% The ``number of gradients'' is the number of gradients applied to the global model. 
% Note that for both \textit{FedAsync} and \textit{FedAvg}, an epoch of local iterations is a full pass of the local dataset. Thus, for \textit{FedAsync}, in each global epoch, $10$ gradients are applied to the global model. For \textit{FedAvg}, since $k=10$, $10 \times 10 =100$ gradients are applied to the global model in each global epoch.

\begin{figure*}[htb!]
\centering
\subfigure[Top-1 accuracy on testing set, $t-\tau \leq 4$]{\includegraphics[width=0.495\textwidth]{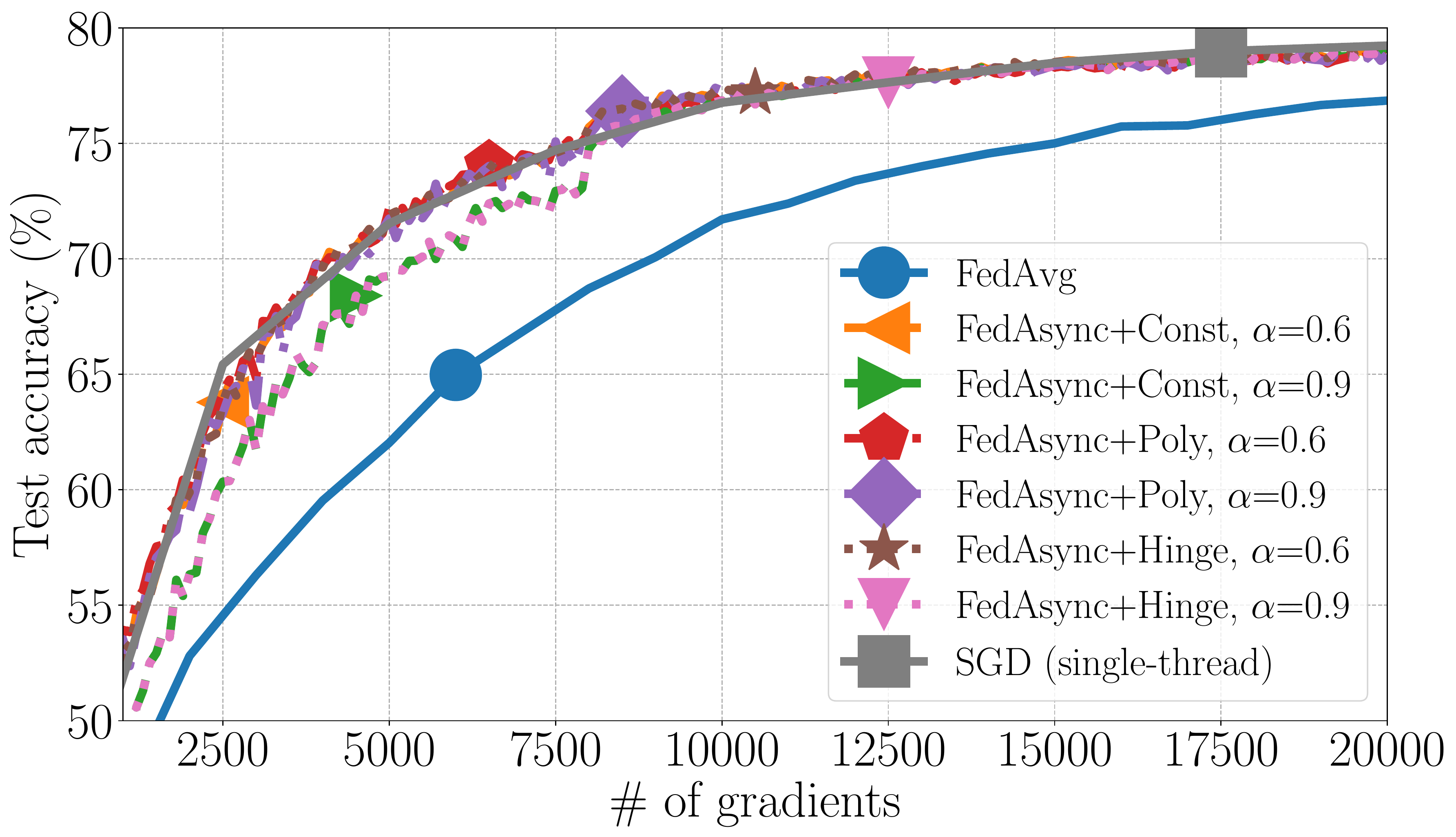}}
\subfigure[Top-1 accuracy on testing set, $t-\tau \leq 16$]{\includegraphics[width=0.495\textwidth]{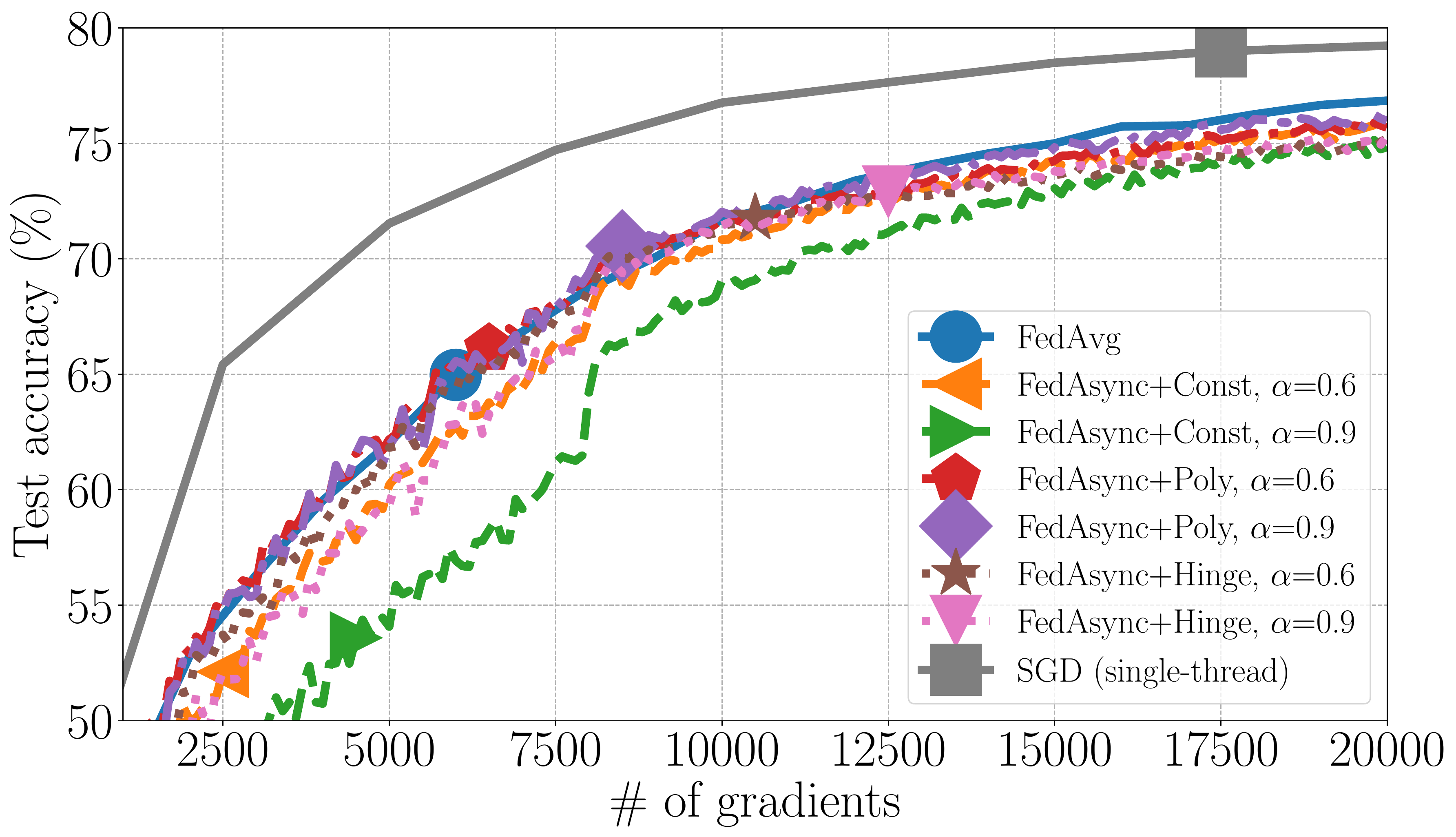}}
\caption{Top-1 accuracy~(the higher the better) vs. \# of gradients on CNN and CIFAR-10 dataset. The maximum staleness is $4$ or $16$. $\gamma = 0.1$, $\rho = 0.005$. For \textit{FedAsync+Poly}, we take $a=0.5$. For \textit{FedAsync+Hinge}, we take $a=10, b=4$. Note that when the maximum staleness is $4$, \textit{FedAsync+Const} and \textit{FedAsync+Hinge} with $b=4$ are the same.}
\label{fig:gradient_cifar10}
\end{figure*}
\vspace{-0.3cm}

\subsection{Empirical results}

% \begin{figure}[htb!]
% \centering
% \includegraphics[width=0.495\textwidth,height=5.5cm]{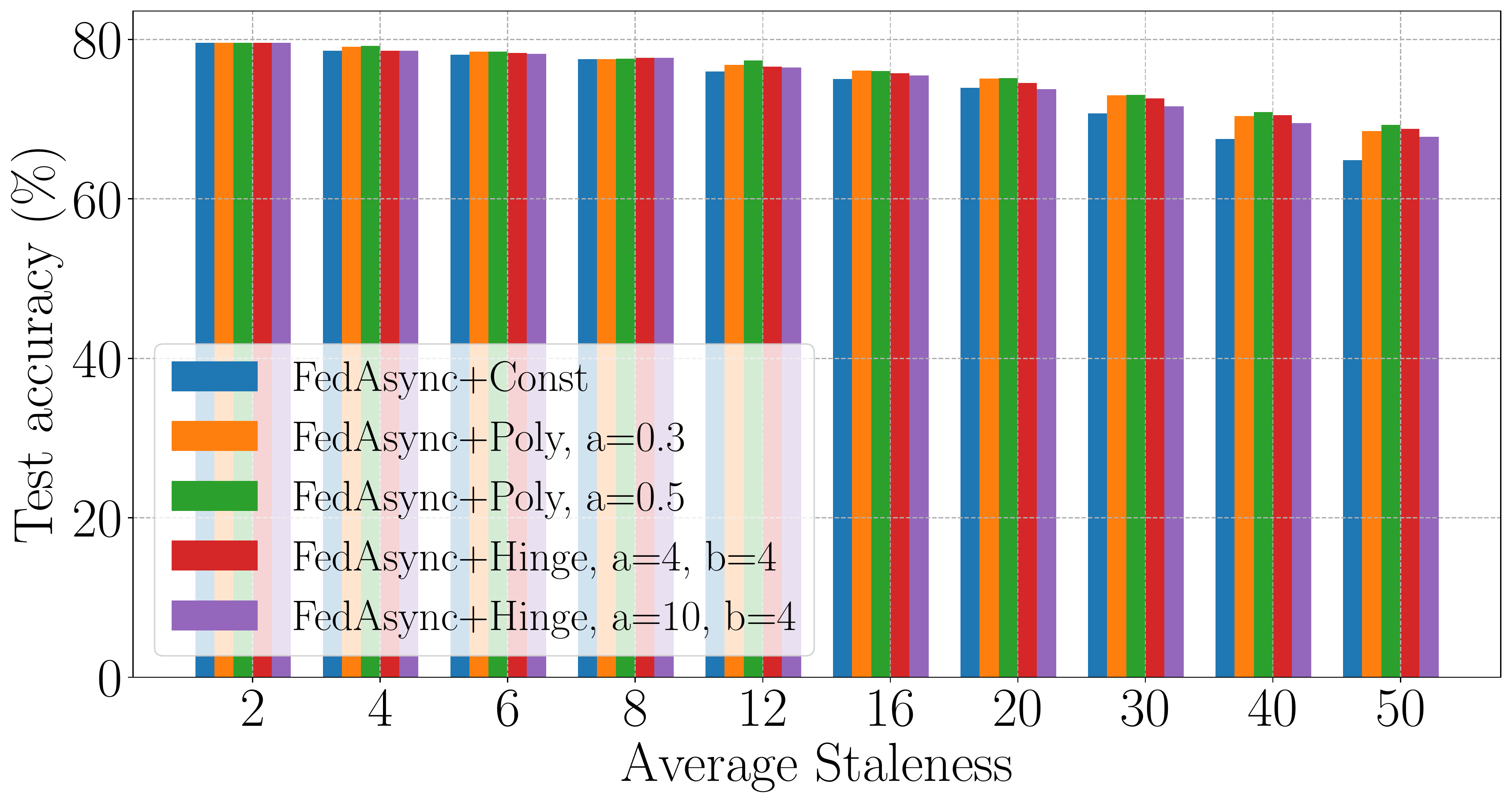}
% \caption{Top-1 accuracy on CNN and CIFAR-10 dataset at the end of training, with different staleness. $\gamma = 0.1$, $\rho = 0.01$. $\alpha$ has initial value $0.9$.}
% \label{fig:delay}
% \end{figure}

% \ig{Figures are quite far from their description. Any way to inline or at least bring them closer? In general, a better idea to put plots around or after the section discussing them.}

We test \textit{FedAsync}~(asynchronous federated optimization in Algorithm~\ref{alg:fed_async}) with different learning rates $\gamma$, regularization weights $\rho$, mixing hyperparameter $\alpha$, and staleness. 

% \ig{Remind readers again what FedAsync is, what SGD is, and what FedAvg is. Some readers will skip all the 6.* sections and jump directly to this section.}

In Figure~\ref{fig:gradient_cifar10} and \ref{fig:gradient_wikitext}, we show how \textit{FedAsync} converges when the number of gradients grows. We can see that when the overall staleness is small, \textit{FedAsync} converges as fast as \textit{SGD}, and faster than \textit{FedAvg}. When the staleness is larger, \textit{FedAsync} converges slower. In the worst case, \textit{FedAsync} has similar convergence rate as \textit{FedAvg}. When  $\alpha$ is too large, the convergence can be unstable, especially for \textit{FedAsync+Const}. The convergence is more robust when adaptive $\alpha$ is used. 

\begin{figure*}[htb!]
\centering
\subfigure[Perplexity on testing set, $t-\tau \leq 4$]{\includegraphics[width=0.495\textwidth]{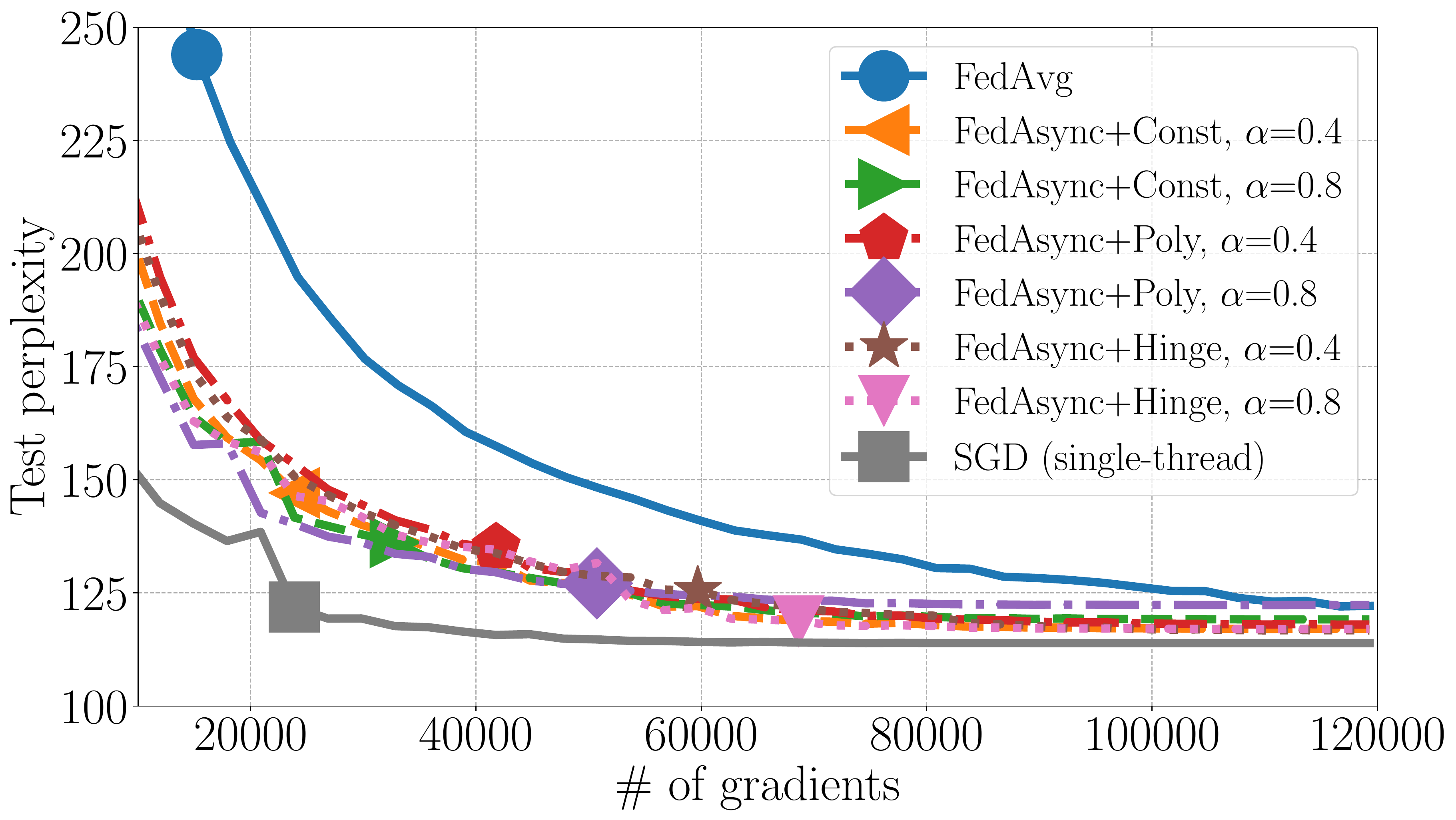}}
\subfigure[Perplexity on testing set, $t-\tau \leq 16$]{\includegraphics[width=0.495\textwidth]{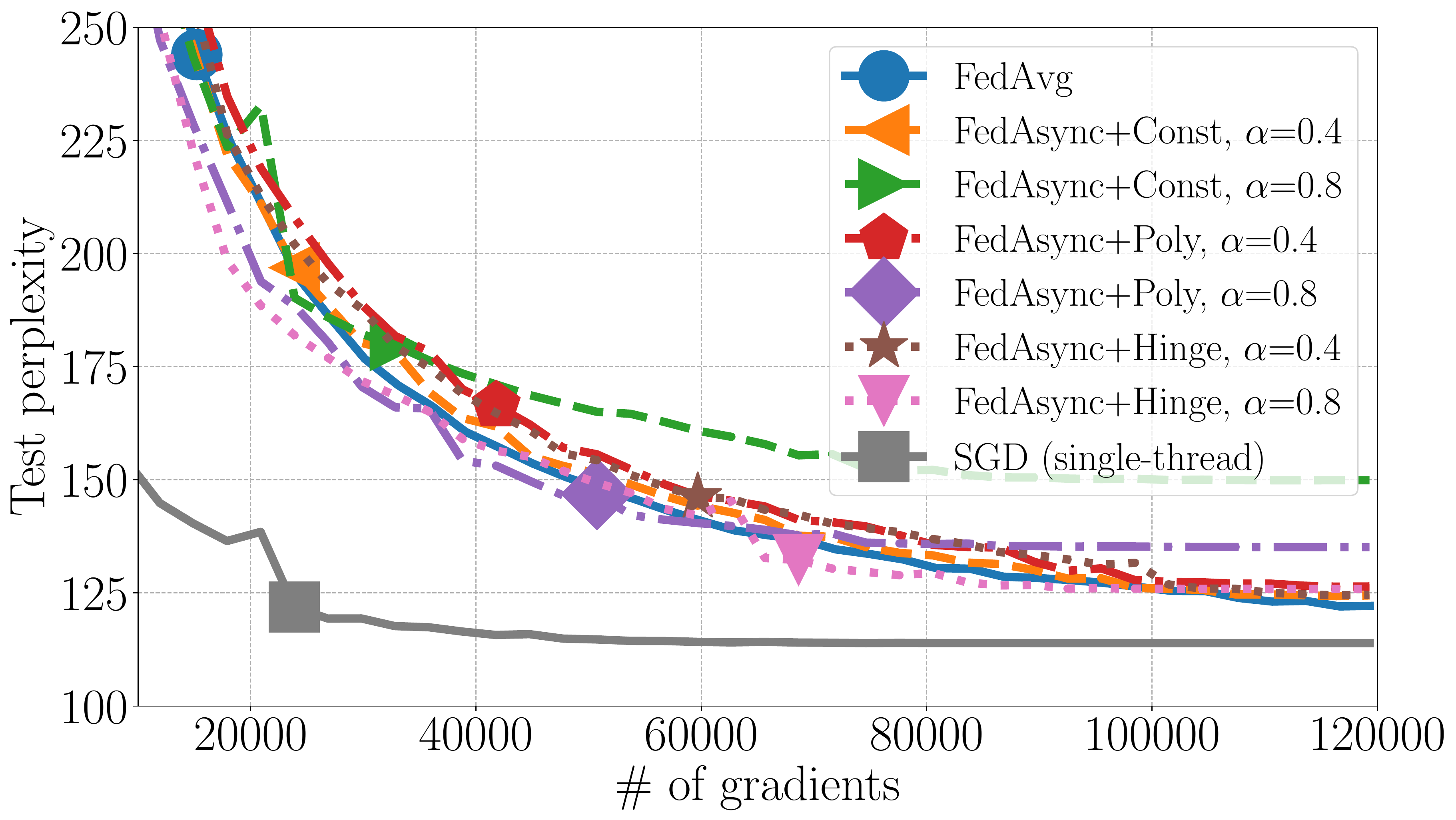}}
\caption{Perplexity~(the lower the better) vs. \# of gradients on LSTM-based language model and WikiText-2 dataset. The maximum staleness is $4$ or $16$. $\gamma = 20$, $\rho = 0.0001$.  For \textit{FedAsync+Poly}, we take $a=0.5$. For \textit{FedAsync+Hinge}, we take $a=10, b=2$.}
\label{fig:gradient_wikitext}
\end{figure*}

In Figure~\ref{fig:delay}, we show how staleness affects the convergence of \textit{FedAsync}, evaluated on CNN and CIFAR-10 dataset. Overall, larger staleness makes the convergence slower, but the influence is not catastrophic. Furthermore, the instability caused by large staleness can be mitigated by using adaptive $\alpha$. Using adaptive $\alpha$ always improves the performance, compared to using constant $\alpha$.

\begin{figure}[htb!]
\centering
\includegraphics[width=0.495\textwidth]{}
\caption{Top-1 accuracy on CNN and CIFAR-10 dataset at the end of training, with different staleness. $\gamma = 0.1$, $\rho = 0.01$. $\alpha$ has initial value $0.9$.}
\label{fig:delay}
\end{figure}

\subsection{Discussion}% and evaluation conclusion}

In general, the convergence rate of \textit{FedAsync} is between single-thread \textit{SGD} and \textit{FedAvg}. Larger $\alpha$ and smaller staleness make \textit{FedAsync} closer to single-thread \textit{SGD}. Smaller $\alpha$ and larger staleness makes \textit{FedAsync} closer to \textit{FedAvg}.

Empirically, we observe that \textit{FedAsync} is generally insensitive to hyperparameters. When the staleness is large, we can tune $\alpha$ to improve the convergence.  Without adaptive $\alpha$, smaller $\alpha$ is better for larger staleness. For adaptive $\alpha$, our best choice empirically was \textit{FedAsync+Hinge}. 
\textit{FedAsync+Poly} and \textit{FedAsync+Hinge} have similar performance. 
% \textit{FedAsync+Hinge} performs slightly better than \textit{FedAsync+Poly} on WikiText-2. \textit{FedAsync+Poly} performs slightly better than \textit{FedAsync+Hinge} on CIFAR-10.

In summary, compared to \textit{FedAvg}, \textit{FedAsync} performs as good as, and in most cases better. 
When the staleness is small, \textit{FedAsync} converges much faster than \textit{FedAvg}. When the staleness is large, \textit{FedAsync} still achieves similar performance as \textit{FedAvg}.

\section{Conclusion}

We proposed a novel asynchronous federated optimization algorithm on non-IID training data. We proved the convergence for a restricted family of non-convex problems. Our empirical evaluation validated both fast convergence and staleness tolerance. An interesting future direction is the design of strategies to adaptively tune the mixing hyperparameters. 

\acks{This work was funded in part by the following grants: NSF IIS 1909577, NSF CNS 1908888, NSF CCF 1934986 and a JP Morgan Chase Fellowship, along with computational resources donated by Intel, AWS, and Microsoft Azure.}

\bibliography{fed_async_opt2020}

\newpage
\onecolumn
\appendix
\vspace*{0.1cm}
\begin{center}
	\Large\textbf{Appendix}
\end{center}
\vspace*{0.1cm}

\section{Proofs}

\setcounter{theorem}{0}

\begin{theorem}
Assume that $F$ is $L$-smooth and $\mu$-weakly convex, and each worker executes at least $H_{min}$ and at most $H_{max}$ local updates before pushing models to the server. We assume bounded delay $t-\tau \leq K$. The imbalance ratio of local updates is $\delta = \frac{H_{max}}{H_{min}}$. Furthermore, we assume that for $\forall x \in \R^d, i \in [n]$, and $\forall z \sim \mathcal{D}^i$, we have $\| \nabla f(x; z) \|^2 \leq V_1$ and $\| \nabla g_{x'}(x; z) \|^2 \leq V_2$, $\forall x'$. Taking $\rho$ large enough such that $\rho > \mu$ and $- (1 + 2 \rho + \epsilon) V_2 + \rho^2 \|x_{\tau, h-1} - x_\tau\|^2 - \frac{\rho}{2} \|x_{\tau, h-1} - x_\tau\|^2 \geq 0, \forall x_{\tau, h-1}, x_{\tau}$, and $\gamma < \frac{1}{L}$, after $T$ global updates, Algorithm~\ref{alg:fed_async} converges to a critical point:
\begin{align*}
&\min_{t = 0}^{T-1} \E \| \nabla F(x_t) \|^2  \\
&\leq \frac{\E\left[ F(x_{0}) - F(x_T) \right]}{\alpha \gamma \epsilon T H_{min}}  + \OM\left( \frac{\gamma H_{max}^3 + \alpha K H_{max} + \alpha^2\gamma K^2 H_{max}^2 + \gamma K^2 H_{max}^2 }{\epsilon H_{min}} \right).
\end{align*}
Taking $\alpha = \frac{1}{\sqrt{H_{min}}}$, $\gamma = \frac{1}{\sqrt{T}}$, $T = H_{min}^5$, we have 
\begin{align*}
\min_{t = 0}^{T-1} \E \| \nabla F(x_t) \|^2   
\leq \OM\left( \frac{1}{\epsilon H_{min}^3} 
 + \frac{\delta^3}{\epsilon \sqrt{H_{min}}} 
 + \frac{K \delta}{\epsilon \sqrt{H_{min}}} 
 + \frac{K^2 \delta^2}{\epsilon \sqrt{H_{min}^5}}  
 + \frac{K^2 \delta^2}{\epsilon \sqrt{H_{min}^3}} \right).
\end{align*}
\end{theorem}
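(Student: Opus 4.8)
The plan is to proceed via the standard descent-lemma analysis for $L$-smooth functions, carefully tracking the effect of (i) the local SGD iterations on the regularized surrogate $g_{x_\tau}$, (ii) the staleness $t-\tau$, and (iii) the mixing step $x_t = (1-\alpha)x_{t-1} + \alpha x_{new}$. First I would write $x_{new} - x_{t-1} = (x_{new} - x_\tau) + (x_\tau - x_{t-1})$, so that $x_t - x_{t-1} = \alpha(x_{new}-x_\tau) + \alpha(x_\tau - x_{t-1})$, and apply $L$-smoothness of $F$ to get
\begin{align*}
\E[F(x_t)] \leq \E[F(x_{t-1})] + \alpha\,\E\ip{\nabla F(x_{t-1})}{(x_{new}-x_\tau) + (x_\tau-x_{t-1})} + \frac{L\alpha^2}{2}\,\E\|(x_{new}-x_\tau)+(x_\tau-x_{t-1})\|^2.
\end{align*}
The cross term is the crux: $x_{new}-x_\tau = -\gamma\sum_{h=1}^{H_\tau^i}\nabla g_{x_\tau}(x^i_{\tau,h-1};z^i_{\tau,h})$, and I would replace $\nabla F(x_{t-1})$ by $\nabla F(x_\tau)$ up to an error controlled by $\|x_{t-1}-x_\tau\|$, which by the bounded-delay assumption $t-\tau\le K$ and the gradient bound $\|\nabla f\|^2\le V_1$ (hence $\|x_{s}-x_{s-1}\|$ is bounded per step) telescopes to an $\OM(K)$-type quantity. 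Taking conditional expectation over the minibatch turns $\nabla g_{x_\tau}(x^i_{\tau,h-1};z^i_{\tau,h})$ into $\nabla g_{x_\tau}(x^i_{\tau,h-1}) = \nabla F_i(x^i_{\tau,h-1}) + \rho(x^i_{\tau,h-1}-x_\tau)$ up to the $i$-averaging, and I would then relate $\nabla F(x_\tau)$ to $\nabla g_{x_\tau}$ summed over the $H_\tau^i$ inner steps, using weak convexity of $F$ (so $g_{x_\tau}$ is strongly convex for $\rho>\mu$) to extract a term proportional to $-\alpha\gamma H_{min}\|\nabla F(x_\tau)\|^2$, which is what eventually appears in the denominator. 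The drift terms $\|x^i_{\tau,h-1}-x_\tau\|^2$ are bounded by $\OM(\gamma^2 H_{max}^2 V_2)$ from the $H_\tau^i\le H_{max}$ inner steps each of size $\le\gamma\sqrt{V_2}$; this is exactly where the condition on $\rho$ ($-(1+2\rho+\epsilon)V_2 + (\rho^2-\rho/2)\|x_{\tau,h-1}-x_\tau\|^2\ge 0$) is invoked to keep the sign of the aggregate contribution favorable and produce the clean $\epsilon$ factor.

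After this, I would sum the descent inequality over $t=0,\dots,T-1$, telescope the $\E[F(x_{t-1})-F(x_t)]$ terms to $\E[F(x_0)-F(x_T)]$, divide by $\alpha\gamma\epsilon T H_{min}$, and collect the residual terms. The residuals organize into: a term $\OM(\gamma H_{max}^3/(\epsilon H_{min}))$ from the inner-iteration variance/drift (one factor $H_{max}$ from the number of inner steps in the cross term, an extra $H_{max}^2$ from $\|x^i_{\tau,h-1}-x_\tau\|^2\sim\gamma^2 H_{max}^2$); a term $\OM(\alpha K H_{max}/(\epsilon H_{min}))$ from the staleness error in replacing $\nabla F(x_{t-1})$ by $\nabla F(x_\tau)$; and terms $\OM((\alpha^2\gamma K^2 H_{max}^2 + \gamma K^2 H_{max}^2)/(\epsilon H_{min}))$ from the $\frac{L\alpha^2}{2}\|\cdot\|^2$ quadratic term and the squared staleness drift $\|x_{t-1}-x_\tau\|^2\sim K^2\gamma^2 H_{max}^2$. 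This reproduces the first displayed bound in the theorem.

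The second displayed bound is then a routine substitution: with $\alpha = H_{min}^{-1/2}$, $\gamma = T^{-1/2}$, $T = H_{min}^5$ (so $\gamma = H_{min}^{-5/2}$), the leading term becomes $\frac{\E[F(x_0)-F(x_T)]}{\alpha\gamma\epsilon T H_{min}} = \OM\!\big(\frac{1}{\epsilon H_{min}^3}\big)$ since $\alpha\gamma T H_{min} = H_{min}^{-1/2}\cdot H_{min}^{-5/2}\cdot H_{min}^5\cdot H_{min} = H_{min}^3$; the term $\gamma H_{max}^3/(\epsilon H_{min}) = H_{min}^{-5/2}\delta^3 H_{min}^3/(\epsilon H_{min}) \cdot$ — more carefully, writing $H_{max} = \delta H_{min}$, gives $\OM(\delta^3/(\epsilon\sqrt{H_{min}}))$; $\alpha K H_{max}/(\epsilon H_{min}) = \OM(K\delta/(\epsilon\sqrt{H_{min}}))$; and the last two give $\OM(K^2\delta^2/(\epsilon\sqrt{H_{min}^5}))$ and $\OM(K^2\delta^2/(\epsilon\sqrt{H_{min}^3}))$ after plugging in $\gamma$ and $\alpha$. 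I would just verify each exponent.

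The main obstacle I anticipate is the cross-term bookkeeping in the first step: correctly handling the double approximation — stale point $x_{t-1}$ vs.\ anchor $x_\tau$, and anchor $x_\tau$ vs.\ the inner iterates $x^i_{\tau,h-1}$ — while keeping the strong-convexity gain of the surrogate $g_{x_\tau}$ aligned in sign with a negative multiple of $\|\nabla F(x_\tau)\|^2$, and ensuring that the somewhat unusual $\rho$-condition in the hypothesis is precisely what is needed to absorb the leftover positive terms. Everything downstream (telescoping, the asymptotic substitution) is mechanical by comparison.
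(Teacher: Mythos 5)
Your plan is sound and reaches the stated bounds, but it takes a genuinely different route from the paper at the top level. You apply the descent lemma ($L$-smoothness of $F$) directly to the global update $x_t - x_{t-1} = \alpha(x_{new}-x_\tau) + \alpha(x_\tau - x_{t-1})$ and then fight the cross term $\ip{\nabla F(x_{t-1})}{\cdot}$. The paper never does this: it runs a two-stage surrogate argument, first tracking $F(x_{\tau,h}) - F(x_*)$ through the local iterations via $F \leq G_{x_\tau}$ and smoothness of $G_{x_\tau}$ along the SGD steps on $g_{x_\tau}$ (telescoping over $h$ to get $-\gamma\epsilon\sum_h \E\|\nabla F(x_{\tau,h})\|^2$ plus a $\gamma^2\OM(\rho H_{max}^3 V_2)$ remainder), and then handling the mixing step by \emph{convexity} of $G_{x_{t-1}}$, i.e.\ $G_{x_{t-1}}((1-\alpha)x_{t-1}+\alpha x_{\tau,H}) \leq (1-\alpha)G_{x_{t-1}}(x_{t-1}) + \alpha G_{x_{t-1}}(x_{\tau,H})$, with the staleness entering only through $\|x_\tau - x_{t-1}\|$ and a separate smoothness bound on $F(x_\tau)-F(x_{t-1})$. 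The paper's route buys it a clean pairing $\ip{\nabla G_{x_\tau}(x_{\tau,h-1})}{\nabla g_{x_\tau}(x_{\tau,h-1})}$ with \emph{both} gradients at the same point and both carrying the $\rho(x_{\tau,h-1}-x_\tau)$ term, so the expansion produces the $+\rho^2\|x_{\tau,h-1}-x_\tau\|^2$ term that the hypothesized $\rho$-condition is built around; your route buys a more standard asynchronous-SGD skeleton and a bound that lands directly on the global iterates $x_\tau$ rather than on the local iterates $x_{\tau,h}$ (arguably a cleaner match to the $\min_t \E\|\nabla F(x_t)\|^2$ conclusion). Your accounting of the four residual terms and the final exponent bookkeeping under $\alpha = H_{min}^{-1/2}$, $\gamma = T^{-1/2}$, $T = H_{min}^5$ agree with the paper.

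One caveat to address when you execute the plan: the $\rho$-condition as stated is tailored to the inner product $\ip{\nabla G_{x_\tau}(y)}{\nabla g_{x_\tau}(y)}$ at a common point $y = x_{\tau,h-1}$; your cross term instead pairs $\nabla F(x_\tau)$ (or $\nabla F(x_{t-1})$), which carries no $\rho(y-x_\tau)$ component, against $\nabla g_{x_\tau}(y)$, so the helpful $\rho^2\|y-x_\tau\|^2$ term does not appear and a naive Young-inequality expansion yields only a useless negative lower bound. You need to first add and subtract to recover the paper's pairing, writing $\nabla F(x_\tau) = \nabla G_{x_\tau}(y) - \rho(y - x_\tau) + (\nabla F(x_\tau)-\nabla F(y))$, and absorb the two correction terms into the existing $\OM(\gamma H_{max})$-scale drift errors; also note the extraction of the $\epsilon$ factor comes from the assumed quadratic-in-$\rho$ inequality plus Cauchy--Schwarz, not from strong convexity of the surrogate as you suggest. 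With that adjustment the argument goes through and matches the theorem.
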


\begin{proof}
Without loss of generality, we assume that in the $t^{\mbox{th}}$ epoch, the server receives the model $x_{new}$, with time stamp $\tau$. We assume that $x_{new}$ is the result of applying $H_{min} \leq H \leq H_{max}$ local updates to $x_{\tau}$ on the $i$th device. We also ignore $i$ in $x_{\tau, h}^i$ and $z_{\tau, h}^i$ for convenience.

Thus, using smoothness and strong convexity, conditional on $x_{\tau, h-1}$, for $\forall h \in [H]$ we have
\begin{align*}
&\E\left[ F(x_{\tau, h}) - F(x_*) \right] \\
&\leq \E\left[ G_{x_\tau}(x_{\tau, h}) - F(x_*) \right] \\
&\leq G_{x_\tau}(x_{\tau, h-1}) - F(x_*) - \gamma \E\left[  \ip{\nabla G_{x_\tau}(x_{\tau, h-1})}{\nabla g_{x_\tau}(x_{\tau, h-1}; z_{\tau, h})} \right] \\
&\quad + \frac{L \gamma^2}{2} \E\left[  \| \nabla g_{x_\tau}(x_{\tau, h-1}; z_{\tau, h}) \|^2 \right] \\
&\leq F(x_{\tau, h-1}) - F(x_*) + \frac{\rho}{2} \|x_{\tau, h-1} - x_{\tau}\|^2 - \gamma \E\left[  \ip{\nabla G_{x_\tau}(x_{\tau, h-1})}{\nabla g_{x_\tau}(x_{\tau, h-1}; z_{\tau, h})} \right] \\
&\quad + \frac{L \gamma^2}{2} \E\left[  \| \nabla g_{x_\tau}(x_{\tau, h-1}; z_{\tau, h}) \|^2 \right] \\
&\leq F(x_{\tau, h-1}) - F(x_*) - \gamma \E\left[  \ip{\nabla G_{x_\tau}(x_{\tau, h-1})}{\nabla g_{x_\tau}(x_{\tau, h-1}; z_{\tau, h})} \right] + \frac{L \gamma^2}{2} V_2 + \frac{\rho H_{max}^2 \gamma^2}{2} V_2 \\
&\leq F(x_{\tau, h-1}) - F(x_*) - \gamma \E\left[  \ip{\nabla G_{x_\tau}(x_{\tau, h-1})}{\nabla g_{x_\tau}(x_{\tau, h-1}; z_{\tau, h})} \right] + \gamma^2 \OM(\rho H_{max}^2 V_2).
\end{align*}

Taking $\rho$ large enough such that $- (1 + 2 \rho + \epsilon) V_1 + \rho^2 \|x_{\tau, h-1} - x_\tau\|^2 - \frac{\rho}{2} \|x_{\tau, h-1} - x_\tau\|^2 \geq 0, \forall x_{\tau, h-1}, x_{\tau}$, and write $\nabla g_{x_\tau}(x_{\tau, h-1}; z_{\tau, h})$ as $\nabla g_{x_\tau}(x_{\tau, h-1})$ for convenience, we have 
\begin{align*}
&\ip{\nabla G_{x_\tau}(x_{\tau, h-1})}{\nabla g_{x_\tau}(x_{\tau, h-1})} - \epsilon \|\nabla F(x_{\tau, h-1})\|^2 \\
&= \ip{\nabla F(x_{\tau, h-1}) + \rho (x_{\tau, h-1} - x_\tau)}{\nabla f(x_{\tau, h-1}) + \rho (x_{\tau, h-1} - x_\tau)} - \epsilon \|\nabla F(x_{\tau, h-1})\|^2 \\
&= \ip{\nabla F(x_{\tau, h-1})}{\nabla f(x_{\tau, h-1})} + \rho \ip{\nabla F(x_{\tau, h-1}) + \nabla f(x_{\tau, h-1})}{x_{\tau, h-1} - x_\tau} \\
&\quad + \rho^2 \|x_{\tau, h-1} - x_\tau\|^2 - \epsilon \|\nabla F(x_{\tau, h-1})\|^2 \\
&\geq -\frac{1}{2} \|\nabla F(x_{\tau, h-1})\|^2 - \frac{1}{2} \|\nabla f(x_{\tau, h-1})\|^2 -\frac{\rho}{2} \|\nabla F(x_{\tau, h-1}) + \nabla f(x_{\tau, h-1})\|^2  \\
&\quad -\frac{\rho}{2} \|x_{\tau, h-1} - x_\tau\|^2 + \rho^2 \|x_{\tau, h-1} - x_\tau\|^2 - \epsilon \|\nabla F(x_{\tau, h-1})\|^2 \\
&\geq -\frac{1}{2} \|\nabla F(x_{\tau, h-1})\|^2 - \frac{1}{2} \|\nabla f(x_{\tau, h-1})\|^2 -\rho \|\nabla F(x_{\tau, h-1}) \|^2 - \rho \| \nabla f(x_{\tau, h-1})\|^2  \\
&\quad -\frac{\rho}{2} \|x_{\tau, h-1} - x_\tau\|^2 + \rho^2 \|x_{\tau, h-1} - x_\tau\|^2 - \epsilon \|\nabla F(x_{\tau, h-1})\|^2 \\
&\geq - (1 + 2 \rho + \epsilon) V_1 + \rho^2 \|x_{\tau, h-1} - x_\tau\|^2 - \frac{\rho}{2} \|x_{\tau, h-1} - x_\tau\|^2 \\
&= a \rho^2 + b \rho + c \geq 0,
\end{align*}
where $a = \|x_{\tau, h-1} - x_\tau\|^2 > 0$, $b = -2 V_1 - \frac{1}{2} \|x_{\tau, h-1} - x_\tau\|^2$, $c = - (1 + \epsilon) V_1$.
Thus, we have $\gamma \ip{\nabla G_{x_\tau}(x_{\tau, h-1})}{\nabla g_{x_\tau}(x_{\tau, h-1})} \leq \gamma \epsilon \|\nabla F(x_{\tau, h-1})\|^2$.

Using $\tau - (t-1) \leq K$, we have 
$
\|x_\tau - x_{t-1}\|^2 
\leq \|(x_\tau - x_{\tau+1}) + \ldots + (x_{t-1} - x_{t-1})\|^2 
\leq K\|x_\tau - x_{\tau+1}\|^2 + \ldots + K\|x_{t-1} - x_{t-1}\|^2 
% &\leq K^2 \left( 2\alpha^2\gamma^2 H_{max}^2 V_2 \right) \\
\leq \alpha^2\gamma^2 K^2 H_{max}^2 \OM(V_2).
$

Also, we have
$
\|x_\tau - x_{t-1}\| 
\leq \|(x_\tau - x_{\tau+1}) + \ldots + (x_{t-1} - x_{t-1})\| 
\leq \|x_\tau - x_{\tau+1}\|^2 + \ldots + \|x_{t-1} - x_{t-1}\|^2 
% &\leq K \left( \alpha\gamma H_{max} \sqrt{V_2} \right) \\
\leq \alpha\gamma K H_{max} \OM(\sqrt{V_2}).
$

Thus, we have
\begin{align*}
&\E\left[ F(x_{\tau, h}) - F(x_*) \right] \\
&\leq F(x_{\tau, h-1}) - F(x_*) - \gamma \E\left[  \ip{\nabla G_{x_\tau}(x_{\tau, h-1})}{\nabla g_{x_\tau}(x_{\tau, h-1}; z_{\tau, h})} \right] + \gamma^2 \OM(\rho H_{max}^2 V_2) \\
&\leq F(x_{\tau, h-1}) - F(x_*) - \gamma \epsilon \|\nabla F(x_{\tau, h-1})\|^2 + \gamma^2 \OM(\rho H_{max}^2 V_2)
\end{align*}

By rearranging the terms and telescoping, we have
\begin{align*}
\E\left[ F(x_{\tau, H}) - F(x_{\tau}) \right]  \leq - \gamma \epsilon \sum_{h = 0}^{H-1} \E \|\nabla F(x_{\tau, h})\|^2 + \gamma^2 \OM(\rho H_{max}^3 V_2).
\end{align*}

Then, we have
\begin{align*}
&\E\left[ F(x_t) - F(x_{t-1}) \right] \\
&\leq \E\left[ G_{x_{t-1}}(x_t) - F(x_{t-1}) \right] \\
&\leq \E\left[ (1-\alpha) G_{x_{t-1}}(x_{t-1}) + \alpha G_{x_{t-1}}(x_{\tau, H}) - F(x_{t-1}) \right] \\
&\leq \E\left[ \alpha \left( F(x_{\tau, H})  - F(x_{t-1}) \right) + \frac{\alpha \rho}{2} \| x_{\tau, H} - x_{t-1} \|^2 \right] \\
&\leq \alpha \E\left[  F(x_{\tau, H})  - F(x_{t-1}) \right] + \alpha \rho \| x_{\tau, H} - x_{\tau} \|^2 + \alpha \rho \| x_{\tau} - x_{t-1} \|^2  \\
&\leq \alpha \E\left[  F(x_{\tau, H})  - F(x_{t-1}) \right] + \alpha \rho \left(\gamma^2 H_{max}^2 \OM(V_2) +  \alpha^2\gamma^2 K^2 H_{max}^2 \OM(V_2)\right) \\
&\leq \alpha \E\left[  F(x_{\tau, H})  - F(x_{t-1}) \right] + \alpha \rho \left(\gamma^2 K^2 H_{max}^2 \OM(V_2) \right) \\
&\leq \alpha \E\left[  F(x_{\tau, H}) - F(x_{\tau}) + F(x_{\tau})  - F(x_{t-1}) \right] + \alpha \gamma^2 K^2 H_{max}^2 \OM(V_2).
\end{align*}

% Using $\beta$-Lipschitz, we have
% \begin{align*}
% & F(x_{\tau})  - F(x_{t-1}) \\
% &\leq \beta \|x_{\tau} - x_{t-1}\| \\
% &\leq \beta \alpha \gamma H_{max} K \OM(\sqrt{V_2}).
% \end{align*}

Using $L$-smoothness, we have
\begin{align*}
&F(x_\tau) - F(x_{t-1}) \\
&\leq \ip{\nabla F(x_{t-1})}{x_\tau - x_{t-1}} + \frac{L}{2} \|x_\tau - x_{t-1}\|^2 \\
&\leq \|\nabla F(x_{t-1})\| \|x_\tau - x_{t-1}\| + \frac{L}{2} \|x_\tau - x_{t-1}\|^2 \\
&\leq \sqrt{V_1} \alpha\gamma K H_{max} \OM(\sqrt{V_2}) + \frac{L}{2}  \alpha^2\gamma^2 K^2 H_{max}^2 \OM(V_2) \\
&\leq  \alpha\gamma K H_{max} \OM(\sqrt{V_1 V_2})   + \alpha^2\gamma^2 K^2 H_{max}^2 \OM(V_2).
\end{align*}

Thus, we have
\begin{align*}
&\E\left[ F(x_t) - F(x_{t-1}) \right] \\
&\leq - \alpha \gamma \epsilon \sum_{h = 0}^{H-1} \E \|\nabla F(x_{\tau, h})\|^2  + \alpha \gamma^2 \OM(\rho H_{max}^3 V_2) \\ 
&\quad + \alpha^2\gamma K H_{max} \OM(\sqrt{V_1 V_2}) + \alpha^3\gamma^2 K^2 H_{max}^2 \OM(V_2)\\
&\quad + \alpha \gamma^2 K^2 H_{max}^2 \OM(V_2).
\end{align*}

By rearranging the terms, we have
\begin{align*}
&\sum_{h = 0}^{H'_{t}-1} \E \|\nabla F(x_{\tau, h})\|^2 \\
&\leq \frac{\E\left[ F(x_{t-1}) - F(x_t) \right]}{\alpha \gamma \epsilon} + \frac{\gamma H_{max}^3}{\epsilon} \OM(V_2) \\
&\quad + \frac{\alpha K H_{max}}{\epsilon} \OM(\sqrt{V_1 V_2})  + \frac{\alpha^2\gamma K^2 H_{max}^2}{\epsilon} \OM(V_2) 
 +  \frac{\gamma K^2 H_{max}^2}{\epsilon} \OM(V_2),
\end{align*}
where $H'_{t}$ is the number of local iterations applied in the $t$th iteration.

By telescoping and taking total expectation, after $T$ global epochs, we have
\begin{align*}
&\min_{t = 0}^{T-1} \E\left[ \| \nabla F(x_t) \|^2 \right] \\
&\leq \frac{1}{\sum_{t = 1}^{T} H'_{t}} \sum_{t = 1}^{T} \sum_{h = 0}^{H'_{t}-1} \|\nabla F(x_{\tau, h})\|^2 \\
&\leq \frac{\E\left[ F(x_{0}) - F(x_T) \right]}{\alpha \gamma \epsilon T H_{min}}  + \frac{\gamma T H_{max}^3}{\epsilon T H_{min}} \OM(V_2) \\
&\quad + \frac{\alpha K T H_{max}}{\epsilon T H_{min}} \OM(\sqrt{V_1 V_2})  + \frac{\alpha^2\gamma K^2 T H_{max}^2}{\epsilon T H_{min}} \OM(V_2) 
 +  \frac{\gamma K^2 T H_{max}^2}{\epsilon T H_{min}} \OM(V_2) \\
&\leq \frac{\E\left[ F(x_{0}) - F(x_T) \right]}{\alpha \gamma \epsilon T H_{min}}  + \OM\left( \frac{\gamma H_{max}^3}{\epsilon H_{min}} \right)
 + \OM\left( \frac{\alpha K H_{max}}{\epsilon H_{min}} \right) \\
&\quad + \OM\left( \frac{\alpha^2\gamma K^2 H_{max}^2}{\epsilon H_{min}} \right) 
 + \OM\left( \frac{\gamma K^2 H_{max}^2}{\epsilon H_{min}} \right).
\end{align*}

Using $\delta = \frac{H_{max}}{H_{min}}$, and taking $\alpha = \frac{1}{\sqrt{H_{min}}}$, $\gamma = \frac{1}{\sqrt{T}}$, $T = H_{min}^5$, we have 
\begin{align*}
&\min_{t = 0}^{T-1} \E\left[ \| \nabla F(x_t) \|^2 \right] \\ 
&\leq \OM\left( \frac{1}{\epsilon H_{min}^3} \right)
 + \OM\left( \frac{\delta^3}{\epsilon \sqrt{H_{min}}} \right)
 + \OM\left( \frac{K \delta}{\epsilon \sqrt{H_{min}}} \right) 
 + \OM\left( \frac{K^2 \delta^2}{\epsilon \sqrt{H_{min}^5}} \right) 
 + \OM\left( \frac{K^2 \delta^2}{\epsilon \sqrt{H_{min}^3}} \right).
\end{align*}

\end{proof}

\section{Experiment details}

% \subsection{Baseline algorithms}
% \begin{algorithm}[hbt!]
% \caption{Federated Averaging~(FedAvg)}
% \begin{algorithmic}[]
% \STATE Input: $k \in [n]$
% \STATE Initialize $x_0$
% \FORALL{epoch $t \in [T]$}
% 	\STATE Randomly select a group of $k$ workers, denoted as $S_t \subseteq [n]$
% 	\FORALL{$i \in S_t$ in parallel}
% 		\STATE Receive the latest global model $x_{t-1}$ from the server
% 		\STATE $x_{t, 0}^i \leftarrow x_{t-1}$
% 		\FORALL{local iteration $h \in [H_t^i]$}
% 			\STATE Randomly sample $z_{t, h}^i$
% 			\STATE $x_{t, h}^i \leftarrow x_{t, h-1}^i - \gamma \nabla f(x_{t, h-1}^i; z_{t, h}^i)$
% 		\ENDFOR
% 		\STATE Push $x_{t, H_t^i}^i$ to the server 
% 	\ENDFOR
% 	\STATE Update the global model:
% 	$
% 	x_{t} = \frac{1}{k} \sum_{i \in S_t} x_{t, H_t^i}^i
% 	$
% \ENDFOR
% \end{algorithmic}
% \label{alg:fed_avg}
% \end{algorithm} 

% \begin{algorithm}[hbt!]
% \caption{SGD (Single Thread)}
% \begin{algorithmic}[]
% \STATE Initialize $x_0$
% \FORALL{iteration $t \in [T]$}
% 	\STATE Randomly sample $z_t$
% 	\STATE $x_t \leftarrow x_{t-1} - \gamma \nabla f(x_{t-1 }; z_t)$
% \ENDFOR
% \end{algorithmic}
% \label{alg:sgd}
% \end{algorithm} 

\subsection{NN architecture}
\label{sec:sup_exp}
In Table~\ref{tbl:cnn}, we show the detailed network structures of the CNN used in our experiments.
\begin{table}[htb!]
\centering
\caption{CNN Summary}
\label{tbl:cnn}
\begin{tabular}{|l|l|l|}
\hline 
Layer (type) & Parameters & Input Layer \\ \hline 
conv1(Convolution)& channels=64, kernel\_size=3, padding=1 &data \\ \hline 
activation1(Activation)& null &conv1 \\ \hline 
batchnorm1(BatchNorm)& null &activation1 \\ \hline 
conv2(Convolution)& channels=64, kernel\_size=3, padding=1 &batchnorm1 \\ \hline 
activation2(Activation)& null &conv2 \\ \hline 
batchnorm2(BatchNorm)& null &activation2 \\ \hline 
pooling1(MaxPooling)& pool\_size=2 &batchnorm2 \\ \hline 
dropout1(Dropout)& probability=0.25 &pooling1 \\ \hline 
conv3(Convolution)& channels=128, kernel\_size=3, padding=1 &dropout1 \\ \hline 
activation3(Activation)& null &conv3 \\ \hline 
batchnorm3(BatchNorm)& null &activation3 \\ \hline 
conv4(Convolution)& channels=128, kernel\_size=3, padding=1 &batchnorm3 \\ \hline 
activation4(Activation)& null &conv4 \\ \hline 
batchnorm4(BatchNorm)& null &activation4 \\ \hline 
pooling2(MaxPooling)& pool\_size=2 &batchnorm4 \\ \hline 
dropout2(Dropout)& probability=0.25 &pooling2 \\ \hline 
flatten1(Flatten)& null &dropout2 \\ \hline 
fc1(FullyConnected)& \#output=512 &flatten1 \\ \hline 
activation5(Activation)& null &fc1 \\ \hline 
dropout3(Dropout)& probability=0.25 &activation5 \\ \hline 
fc3(FullyConnected)& \#output=10 &dropout3 \\ \hline 
softmax(SoftmaxOutput)& null &fc3 \\ \hline   
\end{tabular} 
\end{table}

% \subsection{Additional experiments}

\end{document}